\newtheorem{lemma}{\bf Lemma}[section]
\newtheorem{theorem}[lemma]{\bf Theorem}
\newtheorem{corollary}[lemma]{\bf Corollary}
\newtheorem{proposition}[lemma]{\bf Proposition}
\newtheorem{definition}[lemma]{\bf Definition}
\newtheorem{remark}[lemma]{\bf Remark}
\theoremstyle{remark}
\newtheorem{example}[lemma]{\bf Example}
\newtheorem{algorithm}[lemma]{\bf Algorithm}
\newcommand{\In}{{\mathrm{In}}}
\newcommand{\Fill}{{\mathrm{Dtm}}}
\newcommand{\pFill}{{\mathrm{pDtm}}}
\begin{document}

\title{Graph-theoretic autofill}

\author{Michael Mayer}
\address{Consult AG, CH-8050 Zurich, Switzerland}
\email{\tt michael.mayer@consultag.ch}
\author{Dominic van der Zypen}
\address{Federal Office of Social Insurance, CH-3003 Bern,
Switzerland}
\email{\tt dominic.zypen@gmail.com}


\begin{abstract} 
Imagine a website that asks the user to fill in a web form and -- based on 
the input values -- derives a relevant figure, for instance an expected 
salary, a medical diagnosis, or the market value of a house. 
How to deal with missing input values at run-time? 
Besides using fixed defaults, a more sophisticated approach is
to use predefined dependencies (logical or correlational) between different fields to autofill 
missing values in an iterative way.
Directed loopless graphs (in which cycles are allowed) are the ideal mathematical 
model to formalize these dependencies.
We present two new graph-theoretic approaches to filling missing values at run-time.
\end{abstract}

\maketitle
\parindent = 0mm
\parskip = 2 mm

\section{Introduction}
The Internet offers many online calculators that provide 
relevant figures based on the values entered in a web form. 
Examples are salary calculators, web-based medical advice, and
tools to compute the typical rent of an appartment, just to name a few. 
Usually, all input fields are mandatory, i.\,e.\ cannot be left blank by the user. 
While this minimizes programming effort by the website provider, 
it might force the user to make up or guess unknown information (like the living area of a 4-room appartment 
whose monthly rent is to be estimated by the calculator) just for 
the sake of completeness, 
or the user will even search the web for a simpler calculator 
without ever returning. 

More user-friendly web forms offer fixed default values 
at least for some of the fields. On one hand, this approach is 
attractive thanks to its low programming effort. On the other hand, 
the stronger the input fields are interrelated, 
the less appropriate a fixed default might be in certain cases. 
While a default living area of $80\,m^2$ might serve as a 
good default for a typical appartment, it is 
certainly unrealistic for a studio 
or a 10-room penthouse with indoor swimming-pool. 

What are alternatives to handle missing input values on web forms? 
One option is the reduced feature approach from statistical 
predictive modelling, see e.\,g.\ \cite{SP} or \cite{SG}. There, 
for each combination of available fields, an individual calculator is applied.
The drawback is obvious. Even for a low number $p$ of optional input fields, 
the programming effort exploses, as $2^p$ calculators have to be derived, implemented and supported. 

An elegant compromise between offering over-simplistic fixed defaults and the 
unfeasible reduced feature approach is to autofill missing values by 
prespecified functions of other input values, i.\,e.\ using dynamic 
defaults for certain fields. This can either be made visible to the user or 
invisibly applied before calling the underlying formula of the calculator. 

Consider as example a calculator for the ideal weight based on body height, sex and age. There, the autofill strategy could be as follows.
\begin{itemize}
	\item Sex $s$: Mandatory (1: male, 0: female)
	\item Age $x$: Autofilled by height $z$ in cm using the formula
	\begin{align*}
	   x = f(z) := 
	   \begin{cases} 
	   	 \lfloor (z-30)/130 \cdot 16 + 1 \rfloor & \text{if } 30 \le z \le 160, \\
	   	 40 & \text{if } z > 160, \\
	   	  1 &  \text{if } z < 30.
	   	\end{cases}
	\end{align*}
	\item Body height $z$: Autofilled by height and sex by
	\begin{align*}
	  z = g(x, s) := 
	  \begin{cases}
	    162 + 16s & \text{if } x > 16, \\
	    \lfloor (x - 1)/16 \cdot 130 + 30.5\rfloor & \text{if } x \le 16.
	  \end{cases}
	\end{align*}
\end{itemize} 
Thus, the web form with the fields ``age'', ``sex'' and ``height'' can be considered to be filled (in the sense as the underlying weight formula can be applied) as soon as sex and either of height and age is provided. Or to express the same thought in its negative sense: Even if we have specified replacement functions for age and height, the web form cannot be autofilled if age as well as height are left blank. In more complex situations, also the order in which the replacement functions are applied could matter, e.\,g.\ if sex would be autofilled by a function of the height.

The purpose of this article is to use the notion of directed graphs (in which cycles are allowed) to provide a theoretical framework to determine if an arbitrarily complex web form (or any other multivariate input, e.\,g.\ the argument list of a function written in a programming language like C) can be filled by a fixed set of replacement functions and the partial input provided so far by the user. To do so, each input field is associated with a vertex in a graph and each replacement function with at least one directed edge between these vertices. Note that our considerations do not depend on how the replacement functions are defined or how accurate they are. In practice they will be chosen based on expert knowledge, literature, logical rules or by exploring statistical relationships.
	
Suppose we are given an observation in which knowing the value
of field $A$ would enable you to guess the value of
field $B$. A natural way to represent
this is to draw an {\bf arrow} from $A$ to $B$:
$$A \to B$$
Notice that the relation ``$B$ can be guessed if we know $A$'' 
is sometimes asymmetric, that is, only works in one direction,
as the following example illustrates:

Let $A$ stand for ``gender'' and $B$ for ``number 
of pregnancies had so far''. Then if you know that some
observation represents a male then you can infer
that the value of $B$ must be $0$. On the other hand
if we know that the number of pregnancies is $0$, 
the individual at hand can be either male or female, 
depending on the dataset at hand even with roughly equal
probability.

So this shows that it is natural to choose {\em directed
graphs}, essentially points connected by
arrows, as a model for representing the conclusions
that can be made between some fields 
based on replacement functions. The terms of a directed
graph, and other terms, will be defined rigorously 
in the next section, and for a good introduction
into graph theory, we point the reader to \cite{Di}.

The left side of Figure~\ref{fig:intro} shows the graph corresponding to the example of the weight-calculator.
\begin{figure}
	\centering
	\resizebox{0.3\textwidth}{!}{%
	\begin{tikzpicture}[->,>=stealth',shorten >=1pt,auto,node distance=3cm, thick,main node/.style={circle,draw}]	
		\node[main node] (1) {Sex};
		\node[main node] (2) [below left of=1] {Age};
		\node[main node] (3) [below right of=2] {Height};
		
		\path[every node/.style={font=\sffamily\small}]
		(1) edge node [left] {} (3)
		(2) edge [bend right] node {} (3)
		(3) edge [bend right] node {} (2);
	\end{tikzpicture}}
	\hfill
	\resizebox{0.5\textwidth}{!}{%
		\begin{tikzpicture}[->,>=stealth',shorten >=1pt,auto,node distance=3cm, thick,main node/.style={circle,draw}]	
		\node[main node] (1) {Sex};
		\node[main node] (2) [below left of=1] {Age};
		\node[main node] (3) [below right of=2] {Height};
		\node[main node] (4) [below right of=1] {Pregnant};
		
		\path[every node/.style={font=\sffamily\small}]
		(1) edge node [left] {} (3)
		(3) edge [bend right] node {} (2)
		(4) edge [bend right] node {} (1)
		(4) edge [bend right] node {} (2)
		(4) edge [bend right] node {} (3)
		(1) edge [bend right] node {} (4)
		(2) edge [bend right] node {} (4);
		\end{tikzpicture}}
	\caption{On the left: Graph for weight-calculator. On the right: A more complex situation with additional field.}
\end{figure}
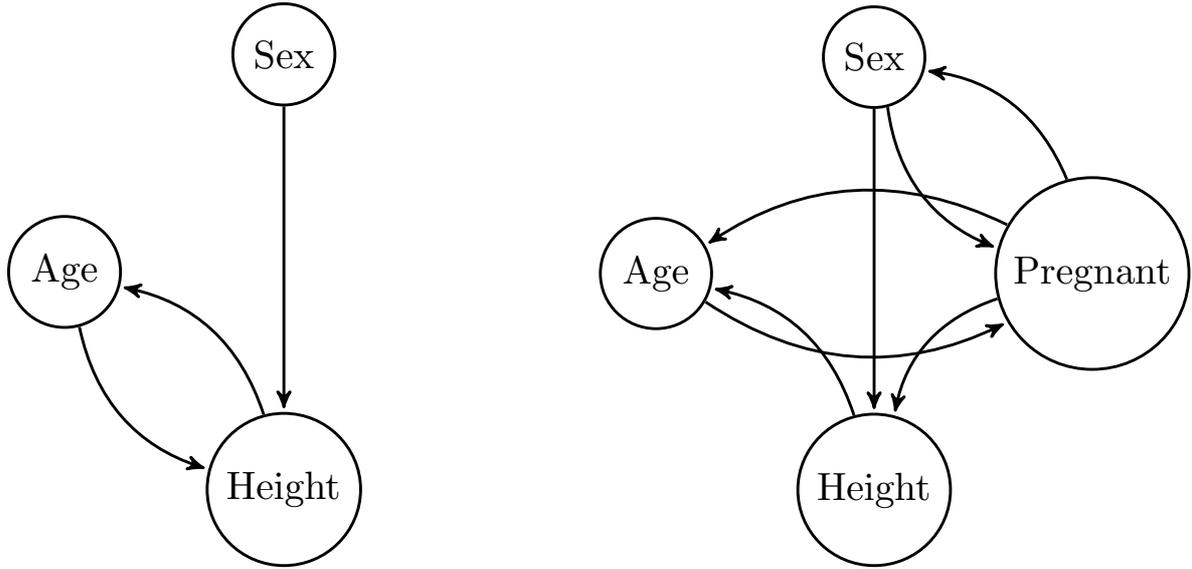\label{fig:intro}

We consider two kinds of replacement functions. 
In Section~\ref{se:theory:complete} we look at
the case of ``complete determination'', that is we are only allowed 
to calculate the value 
of a certain field $Y$ if {\em all} the values
of the fields from which an arrow leads into $Y$
are known or already derived from other fields. Thus, a replacement function can only be evaluated if all arguments are 
made available in some way as in the case of the weight-calculator. Next, Section~\ref{se:theory:partial} 
applies to the situation of ``partial determination'',
where we are allowed to calculate the value
of a certain field $Y$ if {\em only some} of the values
of the fields from which an arrow leads into $Y$
are known or derived from other fields, i.\,e.\ some missing or undetermined arguments in the replacement functions are allowed.
\section{Filling with complete determination}\label{se:theory:complete}

A {\em directed self-loopless graph} (which we will refer
to as {\em directed graph} for simplicity) is a tuple $G=(V,E)$ where $V$ is a
finite non-empty set and $$E\subseteq V\times V
\setminus\{(v,v): v\in V\}.$$ 
We call the set $V$ the set of {\em vertices}, and they represent
the data fields or variables. An {\em edge} represents
a determination arrow: If $(v,w)\in V$ we can make some 
conclusion from the variable $v$ to variable $w$.

\begin{definition}
Let $x,y\in V$. Then there is {\em a directed
path from $x$ to $y$} if there is $n\in\mathbb{N}$
and a map $p:\{0,\ldots,n\}\to V$ such that 
\begin{enumerate}
\item $p(0) = x, p(n) =y$;
\item for $k\in \{0,\ldots,n-1\}$ we have $\big(p(k), p(k+1)\big) \in E$.
\end{enumerate}
If $n\geq 1$ and $p:\{0,\ldots,n\}\to V$ is a directed path 
from $x$ to $x$
we call $p(\{0,\ldots,n\})\subseteq V$, 
that is the image of $p$, a {\em cycle}.
\end{definition}
Note that the above definition implies that a cycle has at least 
2 elements.
\begin{definition}\label{def_in}
For $v\in V$ we denote the set of {\em incoming vertices} by
$\In(v) = \{w\in V: (w,v)\in E\}$ and we let $A(G) = \{v\in V: 
\In(v) = \emptyset\}$. 
\end{definition}
$A(G)$ represents the fields for which no replacement 
function are defined. Thus, they are always mandatory (or, without 
affecting our theory, need a constant default).
\begin{definition}
Let $I\subseteq V$ and let $v\in V$. Then we say that 
$I$ {\em determines} the vertex $v$
if $\In(v)\neq \emptyset$ and $\In(v) \subseteq I$.
\end{definition}
In other words, $I$ stands for the arguments of the 
replacement function for $v$ which is represented in the graph by the determination arrows 
that come in from $\In(v)$.
\begin{definition}
We denote the set of vertices determined by $I$  by
$\Fill(I)$.
Inductively we define the ``determination closure'' 
of $I\subseteq V$ in the following way:
\begin{enumerate}
\item $I_{0} := I$;
\item for $n\geq 0$ set $I_{n+1} :=I_n \cup \Fill(I_{n})$.
\end{enumerate}
We say that $I\subseteq V$ {\bf fills} $V$ if there is $n\in
\mathbb{N}$ such that $I_{n} = V$. Note that trivially,
$V$ itself fills $V$.
\end{definition}
Consequently, if all fields associated with $I$ are entered, then the remaining fields can be autofilled 
iteratively by the prespecified replacement functions.

The following lemmata are mainly used to derive a characterization theorem for filling subsets.
\begin{lemma}\label{emptylem}
If $I\subseteq V$ fills $V$ then
$A(G) \subseteq I$.
\end{lemma}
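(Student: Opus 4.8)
The plan is to show the contrapositive is unnecessary and instead argue directly by induction on the stages $I_n$ of the determination closure, tracking the fact that vertices in $A(G)$ can never be added at any stage. First I would recall that by Definition~\ref{def_in}, a vertex $a \in A(G)$ satisfies $\In(a) = \emptyset$. Now suppose $I$ fills $V$, so there is some $n$ with $I_n = V$; in particular $a \in I_n$ for every $a \in A(G)$.

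The key step is the claim that for every $m \geq 0$, if $a \in A(G)$ and $a \in I_m$, then already $a \in I_0 = I$. I would prove this by induction on $m$. The base case $m = 0$ is trivial. For the inductive step, assume the statement for $m$ and suppose $a \in I_{m+1} = I_m \cup \Fill(I_m)$. If $a \in I_m$, we are done by the inductive hypothesis. Otherwise $a \in \Fill(I_m)$, meaning $I_m$ determines $a$; but the definition of ``determines'' requires $\In(a) \neq \emptyset$, contradicting $a \in A(G)$. Hence this second case cannot occur, and $a \in I_m$, so by the inductive hypothesis $a \in I$.

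Applying the claim with $m = n$: for each $a \in A(G)$ we have $a \in V = I_n$, hence $a \in I$. This gives $A(G) \subseteq I$, as desired.

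I do not expect any real obstacle here; the only point requiring care is making explicit that $\Fill(I_m)$ can never contain a vertex of $A(G)$, which follows immediately from the clause $\In(v) \neq \emptyset$ built into the definition of determination. Everything else is a routine induction unwinding the definition of the determination closure.
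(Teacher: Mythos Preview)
Your proof is correct and is essentially the same argument the paper has in mind: the paper simply writes ``Trivial, follows from definition of determination,'' and your induction is an explicit unpacking of exactly that observation, namely that the clause $\In(v)\neq\emptyset$ in the definition of determination prevents any $a\in A(G)$ from ever entering $\Fill(I_m)$.
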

\begin{proof}
Trivial, follows from definition of determination.
\end{proof}
\begin{lemma}\label{baslem}
Suppose $I\subseteq V$ and suppose that $C$ is a
cycle with $I \cap C = \emptyset$. Then
$$I_1 \cap C = \big(I \cup \Fill(I)\big) \cap C = \emptyset.$$
\end{lemma}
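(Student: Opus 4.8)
The plan is to argue by contradiction: assume some vertex of the cycle lies in $I_1$, and extract a forbidden vertex in $I \cap C$. Concretely, suppose $v \in I_1 \cap C$. Since $I_1 = I \cup \Fill(I)$ and $I \cap C = \emptyset$ by hypothesis, we must have $v \notin I$, hence $v \in \Fill(I)$. By the definition of $\Fill$, this means $I$ determines $v$, i.e. $\In(v) \neq \emptyset$ and $\In(v) \subseteq I$.

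Next I would use the cyclic structure to produce an in-neighbour of $v$ that already lies on $C$. Write the cycle as the image of a directed path $p\colon\{0,\ldots,n\}\to V$ with $n\geq 1$ and $p(0)=p(n)$, and pick $k$ with $p(k)=v$. If $k\geq 1$, then $(p(k-1),p(k))\in E$ gives $p(k-1)\in\In(v)$; if $k=0$, then $v=p(0)=p(n)$ and $(p(n-1),p(n))\in E$ gives $p(n-1)\in\In(v)$. In either case there is a vertex $u\in C$ with $u\in\In(v)$. Combining this with $\In(v)\subseteq I$ yields $u\in I\cap C$, contradicting $I\cap C=\emptyset$. Hence no such $v$ exists, so $I_1\cap C=\emptyset$.

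The only subtlety — and it is a small one — is the bookkeeping in the case $k=0$, where one has to "wrap around" the path and use $p(n-1)$ rather than a nonexistent $p(-1)$; this is exactly where the requirement $n\geq 1$ in the definition of a cycle (and the remark that a cycle has at least two elements) is needed, to guarantee the in-neighbour $u$ is genuinely a vertex on $C$ and not $v$ itself. I expect no real obstacle beyond stating this carefully; the argument is otherwise an immediate unfolding of the definitions of $\Fill$, ``determines'', and ``cycle''.
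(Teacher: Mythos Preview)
Your proof is correct and follows essentially the same approach as the paper: assume some $v\in I_1\cap C$, use the cycle structure to produce an in-neighbour $u\in C\cap\In(v)$, and combine with $\In(v)\subseteq I$ to get $u\in I\cap C$. The paper states this as a contrapositive and simply asserts ``since $C$ is a cycle, $C\cap\In(c^*)\neq\emptyset$'' without your explicit index bookkeeping, but the argument is the same; your added case split $v\in I$ versus $v\in\Fill(I)$ is in fact slightly more careful than the paper's version.
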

\begin{proof} We prove the contrapositive.
Suppose there is $c^*\in C$ such that $c^* \in I_1$. Since
$C$ is a cycle, we have $$C \cap \In(c^*) \neq \emptyset,$$
say $d \in C \cap \In(c^*)$.
Now $c^* \in I_1$ implies $\In(c^*) \subseteq I$ by 
definition, therefore $$d\in \In(c^*) \cap C \subseteq I\cap C,$$ 
so $I\cap C\neq \emptyset$.
\end{proof}
An inductive application of Lemma~\ref{baslem} shows that
any subset that fills $V$ intersects every cycle in $G=(V,E)$:
\begin{lemma}\label{cyclelem}
If $I\subseteq V$ fills $V$ and $C\subseteq V$
is a cycle, then $I\cap C \neq \emptyset$.
\end{lemma}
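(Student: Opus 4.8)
The plan is to prove Lemma~\ref{cyclelem} by induction on $n$, where $n$ is the smallest integer with $I_n = V$, using Lemma~\ref{baslem} as the inductive engine. More precisely, I would prove the slightly stronger auxiliary statement: for every cycle $C$ and every $n \ge 0$, if $I \cap C = \emptyset$ then $I_n \cap C = \emptyset$. Lemma~\ref{cyclelem} then follows immediately by contraposition: if $I$ fills $V$ then $I_n = V$ for some $n$, so $I_n \cap C = C \neq \emptyset$ (recall $C$ is nonempty, indeed has at least two elements), and hence $I \cap C \neq \emptyset$.

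First I would set up the induction on $n$ for the auxiliary claim, fixing an arbitrary cycle $C$ with $I \cap C = \emptyset$. The base case $n = 0$ is just $I_0 = I$, so $I_0 \cap C = \emptyset$ by hypothesis. For the inductive step, assume $I_n \cap C = \emptyset$. Since $I_{n+1} = I_n \cup \Fill(I_n)$, I want to apply Lemma~\ref{baslem} with $I_n$ playing the role of ``$I$'': the hypothesis of that lemma requires a subset disjoint from the cycle $C$, which is exactly $I_n \cap C = \emptyset$, granted by the inductive hypothesis. The conclusion of Lemma~\ref{baslem} is precisely $\big(I_n \cup \Fill(I_n)\big) \cap C = \emptyset$, i.e.\ $I_{n+1} \cap C = \emptyset$, closing the induction.

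The only subtlety worth flagging is purely bookkeeping: Lemma~\ref{baslem} is stated with a generic symbol $I$, and one must be comfortable instantiating it at the stage $I_n$ of the determination closure rather than at the original input set; the notation $I_n \cup \Fill(I_n)$ matching the shape $I \cup \Fill(I)$ in Lemma~\ref{baslem} makes this transparent. I do not expect any real obstacle here — the substance of the argument was already isolated in Lemma~\ref{baslem}, and this lemma is just its iteration — so the ``hard part'' is merely to state the induction cleanly and to remember the final contrapositive twist that converts ``disjoint from $C$ stays disjoint from $C$'' into ``anything that fills $V$ must meet $C$.''
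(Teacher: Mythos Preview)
Your proposal is correct and follows essentially the same approach as the paper: iterate Lemma~\ref{baslem} by induction to obtain that $I\cap C=\emptyset$ implies $I_n\cap C=\emptyset$ for all $n$, then take the contrapositive using $I_n=V$ for some $n$. Your write-up is in fact a slightly more explicit version of the paper's argument, spelling out the base case and the instantiation of Lemma~\ref{baslem} at $I_n$.
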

\begin{proof}
Let $I\subseteq V$ be any subset of $V$ and let $C\subseteq V$
be a cycle. Applying Lemma~\ref{baslem} inductively implies that
\begin{quote}
$(\star)$\hspace*{4mm} if $I\cap C = \emptyset$ then 
$I_n \cap C = \emptyset$
for all $n\in\mathbb{N}$.
\end{quote}
So if $I$ fills $V$
then $I_n=V$ for some $n$, and in particular
$I_n\cap C \neq \emptyset$ for that $n$.
The contrapositive of $(\star)$ directly implies
$I\cap C \neq \emptyset$.
\end{proof}
This helps us to prove a characterization theorem for 
filling subsets $I\subseteq V$.
\begin{theorem}\label{chara1}
Let $G=(V,E)$ be a self-loopless directed graph, and let $I\subseteq V$.
Then the following statements are equivalent:
\begin{enumerate}
\item $I$ is filling; 
\item $A(G)\subseteq I$, and for every cycle $C$ in $(V,E)$ 
we have $C\cap I \neq \emptyset$.
\end{enumerate}
\end{theorem}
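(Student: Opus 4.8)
The plan is to treat the two implications separately, since $(1)\Rightarrow(2)$ is immediate from the preceding lemmata while $(2)\Rightarrow(1)$ carries all the content. For $(1)\Rightarrow(2)$ there is nothing to prove beyond citation: if $I$ is filling then $A(G)\subseteq I$ is precisely Lemma~\ref{emptylem}, and $C\cap I\neq\emptyset$ for every cycle $C$ is precisely Lemma~\ref{cyclelem}.

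For $(2)\Rightarrow(1)$ I would argue by contraposition in the following form: assume $I$ is \emph{not} filling; I will show that then $A(G)\not\subseteq I$ or some cycle misses $I$. So suppose in addition that $A(G)\subseteq I$ and derive a cycle $C$ with $C\cap I=\emptyset$. Since $V$ is finite, the increasing chain $I_0\subseteq I_1\subseteq\cdots$ stabilises, say $I_N=I_{N+1}=:\overline I$, and from $I_{N+1}=I_N\cup\Fill(I_N)$ we get $\Fill(\overline I)\subseteq\overline I$, i.e.\ $\overline I$ is closed under determination. By the assumption that $I$ does not fill $V$ we have $\overline I\neq V$, so $W:=V\setminus\overline I$ is non-empty. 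The key observation is that no $w\in W$ is determined by $\overline I$ (else $w\in\Fill(\overline I)\subseteq\overline I$), which by the definition of determination means $\In(w)=\emptyset$ or $\In(w)\not\subseteq\overline I$. The first alternative is impossible, since it would give $w\in A(G)\subseteq I\subseteq\overline I$; hence every $w\in W$ satisfies $\In(w)\cap W\neq\emptyset$.

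It then remains to extract a cycle from $W$ by a backward-walk plus pigeonhole argument. Pick any $w_0\in W$ and recursively choose $w_{k+1}\in\In(w_k)\cap W$ (possible by the previous step), so that $(w_{k+1},w_k)\in E$ for all $k$; because $G$ is loopless, consecutive vertices are distinct. Since $W$ is finite there are indices $i<j$ with $w_i=w_j$, and then $k\mapsto w_{j-k}$ for $k\in\{0,\ldots,j-i\}$ is a directed path from $w_i$ to $w_j=w_i$ of length $j-i\geq 2$, whose image $C\subseteq W$ is therefore a cycle (with at least two elements, as required by the definition). Finally $C\subseteq W=V\setminus\overline I$ and $I\subseteq\overline I$ give $C\cap I=\emptyset$, contradicting clause~(2). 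The only delicate point is this last construction: one must be sure the chosen in-neighbours never leave $W$ — which is exactly what the observation ``$\overline I$ is determination-closed and $A(G)\subseteq I$'' provides — and that the first repeated vertex yields a genuine directed cycle in the paper's sense, which holds because looplessness forces $j-i\geq 2$. Everything else is routine unwinding of the definitions.
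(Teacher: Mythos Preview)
Your proof is correct and follows essentially the same route as the paper: both directions cite the same lemmata for $(1)\Rightarrow(2)$, and for $(2)\Rightarrow(1)$ both argue by contraposition, let the chain stabilise at $I_N$, and run a backward walk inside $V\setminus I_N$ using $\In(\cdot)$ until finiteness forces a repeat. Your write-up is in fact a bit more careful than the paper's, which simply asserts that the set $Z=\{z_k\}$ ``must contain a cycle'' without spelling out the pigeonhole step or verifying $j-i\geq 2$ via looplessness.
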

A web form can thus be autofilled as soon as values are provided 
\begin{enumerate}
\item for all fields without replacement function, and 
\item for one field per cycle.
\end{enumerate}
\begin{proof}[Proof of Theorem~\ref{chara1}]
	
(1) $\implies$ (2).  Taken care of by Lemmas \ref{emptylem} and \ref{cyclelem}.

(2) $\implies$ (1). We assume that $I\subseteq V$ is not filling and $v\in I$ for
every $v\in V$ with $\In(v)=\emptyset$ and construct a cycle
$C^*$ that does not intersect $I$ (i.e. $C^*\cap I = \emptyset$).

Since $I$ is not filling, we have $I_n \neq V$ for all 
$n\in\mathbb{N}$. Since $V$ is finite, the increasing
sequence $$I = I_0\subseteq I_1\subseteq I_2 \subseteq \ldots$$
stabilizes at some $N\in\mathbb{N}$, that is there is
$N\in\mathbb{N}$ such that $I_k = I_N$ for all $k\geq N$, 
and of course $I_N\neq V$.

So we pick $z_0\in V\setminus I_N$.

Note that $\In(z_0) \neq \emptyset$ because all vertices
$v$ with $\In(v) =\emptyset$ are included in $I$ by assumption,
and $z_0\notin I$. Since $I_{N+1} =  I_N$ we have that $z_0$ is 
not determined by $I_N$ which means $\In(z_0)\not\subseteq I_N$. 
So we pick $z_1 \in \In(z_0)\setminus I_N$.

Inductively, and similarly to above, we pick $z_{k+1}\in 
\In(z_k)\setminus I_N$ for all $k\in\mathbb{N}$.

Then we consider the set $Z = \{z_k: k\in\mathbb{N}\}$.
By construction $Z$ has empty intersection with $I_N$ and
therefore also with $I\subseteq I_N$.  Moreover we have 
$(z_{k+1}, z_k)\in E$ for all $k$. Because $V$ is finite,
the set $Z$ must contain a cycle $C^*$ and 
we have $C^*\cap I = \emptyset$.
\end{proof}
A cycle $C$ is said to be {\em minimal} if it is minimal
amongst all cycles with respect to $\subseteq$ (that is,
whenever $C'\subseteq C$ and $C'$ is a cycle, then $C'=C$).
Sometimes, minimal cycles are called {\em elementary}, 
for instance in \cite{jo}.
\begin{remark}
It is easy to see that Theorem \ref{chara1} can be made a bit
simpler: in order to verify that a certain subset $I\subseteq V$
is filling, it suffices to check that $A(G)\subseteq I$ and
that $I$ intersects every {\em minimal} cycle. 
\end{remark}
\begin{example}\label{ex:chara1}
	In the introductory example of a weight-calculator, we have considered the three input fields ``Sex'', ``Age'' and ``Height'' along with two replacement functions. As shown on the left hand side of Figure~\ref{fig:intro}, the fields are represented by the three vertices in the graph $G$. The directed edge from ``Height'' to ``Age'' stands for the replacement function $f$ for age based on height. Finally, the replacement function $g$ for height depending on age and sex is represented by the two other edges pointing to ``Height''. By our definitions, we can for instance make the following statements:
	\begin{enumerate}
		\item The vertex set $\{\text{Sex}, \text{Age}\}$ determines $\{\text{Height}\}$.
		\item The vertex $\{\text{Height}\}$ determines $\{\text{Age}\}$.
		\item The set $A(G)$ equals $\{\text{Sex}\}$ (no edge points to it).
		\item By Theorem~\ref{chara1}, the vertex set $\{\text{Sex}\}$ is not filling, i.\,e.\ by entering only sex, the other fields cannot be autofilled. The reason is that the set $\{\text{Sex}\}$ does not intersect the cycle formed by the two vertices $\{\text{Age}, \text{Height}\}$.
		\item There are exactly three filling subsets for $G$: $\{\text{Sex}, \text{Age}\}$, $\{\text{Sex}, \text{Height}\}$ and (trivially) also $\{\text{Sex}, \text{Age}, \text{Height}\}$ as they all contain $A(G)$ and at least one element of the only cycle.
	\end{enumerate}
\end{example}
For complex graphs, the application of Theorem~\ref{chara1} might need algorithmic support to identify $A(G)$ and particularly the cycles to check during runtime if partial input already fills the remaining fields. To do so, the following algorithms can be applied.
\begin{algorithm}[Identifying $A(G)$]\label{alg:chara1_1}
A convenient representation for a directed graph is the {\em adjacency matrix}:
The vertex are numbered $1,\ldots, n$ and we assign a binary $n\times n$-matrix 
$$M_G\in \mathbb{Z}_2^{n\times n}$$ to $G$ by setting $M_G[i,j]= 1$ if
and only if $(i,j)\in E$, and $M_{G}[i,j] = 0$ otherwise.

Now $A(G)$ is easily identified: $i\in A(G)$ if and only if $M_G[\cdot, i]$ (that is the $i$th
column vector) is the constant $0$ vector. 
\end{algorithm}
\begin{algorithm}[Identifying cylces]
	 Different algorithms exist for finding cycles in a graph, see e.\,g.~\cite{jo} for a solution and further references. 
\end{algorithm}

The next result link graph filling to the concept of directed acyclic graph (DAG).
\begin{theorem}[Connection to DAGs]\label{th:dag}
	Let $G=(V,E)$ be a self-loopless directed graph, and let $I\subseteq V$. Furthermore denote by $G' = (V, E')$ with $E' = E \setminus \{(x,i) : x\in V \text{ and } i\in I\}$ the subgraph without edges pointing to any $v \in I$. Then the following are equivalent: 
	\begin{enumerate}
		\item $I$ is filling in $G=(V,E)$;
		\item $I$ is filling in $G'=(V,E')$ and $G'$ is a DAG.
	\end{enumerate}
\end{theorem}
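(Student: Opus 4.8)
The plan is to reduce both sides of the equivalence to the combinatorial criterion of Theorem~\ref{chara1} and then to compare the two resulting conditions. Write $\In_G$ and $\In_{G'}$ for the in-neighbourhoods in the two graphs. Since $E'$ is obtained from $E$ by deleting exactly the edges whose head lies in $I$, we have $\In_{G'}(i)=\emptyset$ for every $i\in I$, while $\In_{G'}(v)=\In_G(v)$ for every $v\notin I$. From this I would extract two bookkeeping facts about $G'$. First, $A(G')=A(G)\cup I$: the displayed relations give $A(G)\cup I\subseteq A(G')$ at once, and conversely a vertex $v$ with $\In_{G'}(v)=\emptyset$ either lies in $I$ or satisfies $\In_G(v)=\In_{G'}(v)=\emptyset$, hence lies in $A(G)$. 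Second, the cycles of $G'$ are precisely the cycles of $G$ that are disjoint from $I$.

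For the second fact: because $E'\subseteq E$, every cycle of $G'$ is a cycle of $G$, and it cannot meet $I$, since a vertex $i$ lying on a cycle $C$ of $G'$ would satisfy $\In_{G'}(i)\cap C\neq\emptyset$ (the same observation used in the proof of Lemma~\ref{baslem}), contradicting $\In_{G'}(i)=\emptyset$. Conversely, if $C$ is a cycle of $G$ with $C\cap I=\emptyset$, pick a closed directed path realizing $C$; every edge of that path has its head in $C$, hence outside $I$, hence survives in $E'$, so $C$ is a cycle of $G'$. The one place that calls for a little care is exactly this converse: ``cycle'' is defined as the image of a closed directed path, so one really has to produce a realizing path all of whose edges lie in $E'$, and the disjointness $C\cap I=\emptyset$ is what makes that possible.

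Now I would run the logical chain. By Theorem~\ref{chara1}, ``$I$ is filling in $G'$'' means ``$A(G')\subseteq I$ and $I$ meets every cycle of $G'$''; but by the second fact every cycle of $G'$ is already disjoint from $I$, so the clause ``$I$ meets every cycle of $G'$'' holds if and only if $G'$ has no cycle at all, i.e.\ if and only if $G'$ is a DAG. Hence statement (2) is equivalent to ``$A(G')\subseteq I$ and $G'$ is a DAG''. Using $A(G')=A(G)\cup I$ this simplifies to ``$A(G)\subseteq I$ and $G'$ is a DAG'', and using the cycle correspondence, ``$G'$ is a DAG'' is equivalent to ``$I$ meets every cycle of $G$''. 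So (2) is equivalent to ``$A(G)\subseteq I$ and $I$ meets every cycle of $G$'', which by Theorem~\ref{chara1} is statement (1).

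There is no single hard step; the whole argument lives in pinning down the two bookkeeping facts from the definitions, with the cycle-correspondence being the most delicate because of the path-image definition of a cycle. The subtlety worth flagging in the writeup is that the two clauses of (2) are not independent: once $I$ is filling in $G'$, Theorem~\ref{chara1} turns the DAG hypothesis into the statement that the otherwise-vacuous ``meets every cycle'' clause is genuinely vacuous. If one preferred a more hands-on route, one could instead verify by induction on $n$ that the determination closures $I_n$ agree when computed in $G$ and in $G'$ --- deleting edges into the already-given set $I$ never changes what gets derived --- which makes ``$I$ filling in $G$'' $\Leftrightarrow$ ``$I$ filling in $G'$'' transparent, and then prove the DAG claim directly from Lemma~\ref{cyclelem}; but the route via Theorem~\ref{chara1} is shorter.
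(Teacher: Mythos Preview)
Your proof is correct and takes a genuinely different route from the paper's. The paper proves $(1)\Rightarrow(2)$ directly: it shows by induction on $n$ that the determination closures satisfy $I'_n\supseteq I_n$ (so $I$ filling in $G$ forces $I$ filling in $G'$), and then derives the DAG property by contradiction, invoking Theorem~\ref{chara1} only to guarantee that $I$ meets any putative cycle of $G'$ and then observing that an edge of that cycle into a vertex of $I$ would have been deleted. For $(2)\Rightarrow(1)$ the paper simply remarks that $\Fill(I)$ computed in $G$ and in $G'$ coincide, so the closures agree and the DAG hypothesis is not even used.

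Your approach instead reduces everything to Theorem~\ref{chara1} via the two structural identities $A(G')=A(G)\cup I$ and \{cycles of $G'$\} $=$ \{cycles of $G$ disjoint from $I$\}. This is shorter and more conceptual: it exposes that the two clauses of (2) are not independent (the ``$I$ meets every cycle of $G'$'' part of ``$I$ filling in $G'$'' is equivalent to the DAG clause), so that (2) collapses to ``$A(G)\subseteq I$ and $G'$ is a DAG'', which in turn is exactly the Theorem~\ref{chara1} criterion for (1). The paper's argument, by contrast, is self-contained at the level of determination closures and makes explicit the observation --- which you relegate to a parenthetical alternative --- that deleting edges into $I$ never changes what $I$ determines. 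Both arguments are sound; yours leans harder on the characterization theorem, while the paper's shows the filling equivalence more elementarily.
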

\begin{proof}
	
$(1) \implies (2)$. First, we prove that $I$ is filling in $G=(V,E')$.
Let $I'_n$ be the determination closures
of $I$ in the graph $G'$. With $I_n$ we denote the determination closures
of $I$ in $G$. For $v\in V$ we denote by $\In'(v)$ the set
of incoming vertices in the graph $G'$, and by $\In(v)$ the
set of incoming vertices in $G$. 
As $I$ is filling in $G$ by assumption, we have
$I_N = V$ for some $N\in\mathbb{N}$, so it suffices to show that 
$$I'_n \supseteq I_n \textrm{ for all }n\in\mathbb{N}.$$ We proceed
by induction. Clearly $I'_0 = I_0 = I$. Suppose we have
$I'_k \supseteq I_k$ for some $k$ and show that $I'_{k+1}
\supseteq I_{k+1}$. Let $v\in I_{k+1}$. If $v\in I_k$ we
get $v\in I'_{k+1}$ automatically since $v\in I_k
\subseteq I'_k \subseteq I'_{k+1}$. If $v\notin I_k$
then we have trivially $v\notin I$. Moreover,
the fact that $I_k$ determines $v$ in $G$ implies
in particular that
$\In(v) \neq \emptyset$, so we pick some
incoming vertex $j\in \In(v)$. Since $v\notin I$ we have
$(j, v)\in E'$ by the definition of $E'$, so 
$$(A) \hspace*{1cm} \In'(v)\neq \emptyset 
\textrm{ in } G'.$$ Since $I_k$ determines
$v$ in the graph $G$, we get
$\In(v) \subseteq I_k$, and by
definition of $G'$ we get $\In'(v) \subseteq \In(v)$.
Combining this gives $\In'(v) \subseteq I_k \subseteq I'_k$
by induction assumption, and together with $(A)$
this implies that $I'_k$ determines $v$ in the graph $G'$, 
so $v\in I'_{k+1}$, which finishes the proof of (1).

Next, we show that $G'$ is a DAG. 
Assume that $G'$ contains a cycle $C$. By Theorem 
\ref{chara1} we know that $I\cap C\neq\emptyset$, so
pick $v^*\in I\cap C$. Because $C$ is a cycle there
is a bijection $p:\{0,\ldots, n\}\to C$ such that 
$(p(k),p(k+1))\in E'$ for $k\in\{0,\ldots,n-1\}$ and
$(p(n), p(0))\in E'$. We can pick $p$ such that 
$p(0) = v^*$. But by definition of $E'$ we have
$(p(n), v^*) \in E \setminus E'$, contradiction.

$(2) \implies (1)$. This is easily verified by 
noticing that $\Fill(I)$ in the graph $G'=(V,E')$ equals $\Fill(I)$ in 
$G=(V,E)$ 
(we don't even need the assumption that $G'=(V,E)$ is a DAG).
\end{proof}

Since a DAG $G' = (V, E')$ does not contain cycles, Theorem~\ref{chara1} ensures that $I \subseteq V$ is filling if and only if $A(G') \subseteq I$.

This provides a second possibility to verify if a subset $I \subseteq V$ of vertices of a directed graph $G = (V, E)$ is filling or not: Delete all edges pointing to vertices in $I$ and check if the resulting subgraph $G'$ is a DAG with $A(G') \subseteq I$. If yes, $I$ is filling.
 
\begin{example}\label{ex:dag}
	In Example~\ref{ex:chara1} we have utilized Theorem~\ref{chara1} to show that, amongst others, the set $I := \{\text{Sex}, \text{Age}\}$ is filling. Alternatively, we consider the subgraph $G'$ without edges pointing to $I$ (see left side of Figure~\ref{fig:dag}). Obviously, $G'$ is a DAG with $I=A(G')$ and thus, $I$ is filling in the original graph $G$. The right side of Figure~\ref{fig:dag} shows that $I' = \{\text{height}\}$ alone is not filling: Although the subgraph $G''$ is a DAG, the subset $A(G'') = \{\text{Sex}, \text{Height}\} \not \subseteq I'$.
	\begin{figure}[h]
		\hspace{2cm}
		\centering
		\resizebox{0.25\textwidth}{!}{%
			\begin{tikzpicture}[->,>=stealth',shorten >=1pt,auto,node distance=3cm, thick,main node/.style={circle,draw}]	
			\node[main node] (1) [shading angle = 45]{Sex};
			\node[main node] (2) [shading angle = 45, below left of=1] {Age};
			\node[main node] (3) [below right of=2] {Height};
			
			\path[every node/.style={font=\sffamily\small}]
			(1) edge node [left] {} (3)
			(2) edge [bend right] node {} (3);
			\end{tikzpicture}}
		\hfill
		\resizebox{0.25\textwidth}{!}{%
			\begin{tikzpicture}[->,>=stealth',shorten >=1pt,auto,node distance=3cm, thick,main node/.style={circle,draw}]	
			\node[main node] (1) {Sex};
			\node[main node] (2) [below left of=1] {Age};
			\node[main node] (3) [shading angle = 45, below right of=2] {Height};
			
			\path[every node/.style={font=\sffamily\small}]
			(3) edge [bend right] node {} (2);
			\end{tikzpicture}}
		\hspace{2cm}
		\caption{On the left: Subgraph $G'$ without edges pointing to candidate set $I := \{\text{Sex}, \text{Age}\}$ (highlighted). Right: Subgraph $G''$ without edges pointing to $I' = \{\text{Height}\}$. Both subgraphs are DAGs. Since $A(G') \subseteq I$, $I$ is filling. But $A(G'') \not \subseteq I$, thus $I'$ is not filling.}
	\end{figure}
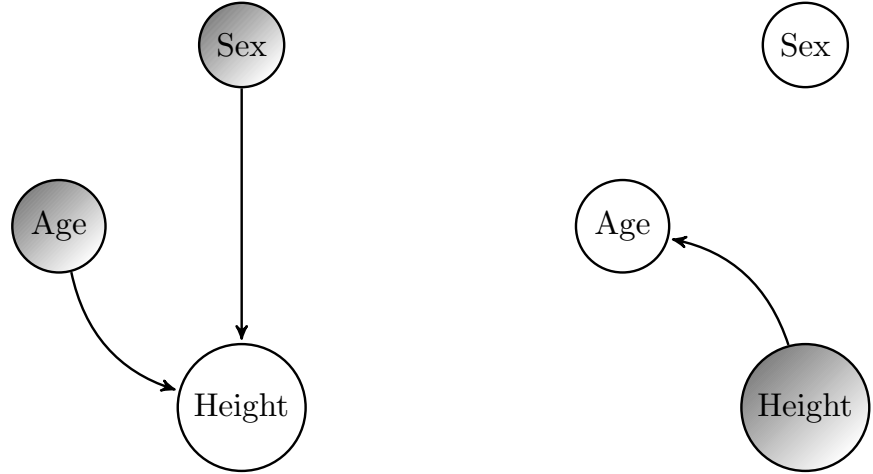\label{fig:dag}
\end{example}
While in our simple examples it is easy to verify if a (sub-)graph is a DAG, in more involved settings the following algorithm could be used systematically.

A simple algorithm to check if $G$ is a DAG uses the following algorithm.
\begin{algorithm}[Directed path]\label{alg:directed.path}
Whenever one deals with directed paths in graphs, 
Dijkstra's Algorithm as described by himself in \cite{Dij} 
is one of the most useful tools: it finds the shortest
path (if there is any) between vertices.
So let $G$ be a directed graph on $n$ vertices
and  let $M_G$ be its adjacency matrix.
There is a directed path from vertex $i$ to vertex $j$ if and only
if $M_G^k[i,j] > 0$ for some $k\in\{1,\ldots,n-1\}$. 	
\end{algorithm}
\begin{algorithm}[Checking whether $G$ is a DAG]\label{alg:dag}
Since in a DAG with $n$ vertices,  we have
no path of length $n$, Algorithm~\ref{alg:directed.path} gives the following criterion: 
$G$ is a DAG if and only if the sum of all traces of $M_G^1, M_G^2, \dots, M_G^{n-1}$ is 0. Other algorithms exist to identify if a directed graph is acyclic, see e.\,g.\ \cite{Ta76}.
\end{algorithm}
\begin{definition}
	We say that a filling subset $I\subseteq V$
	is {\em minimal}
	(with respect to $\subseteq$) if no
	proper subset of $I$ is filling.
\end{definition}
The following example shows that two different
minimal filling subsets do not necessarily
have the same number of elements:
\begin{example}\label{example_minind}
	Let $V=\{1,2,3\}$ and $E = \{(1,2),(2,1)\} 
	\cup \{(2,3),(3,2)\}$. 
	\begin{center}
		\begin{tikzpicture}[->,>=stealth',shorten >=1pt,auto,node distance=3cm,
		thick,main node/.style={circle,draw}]
		
		\node[main node] (1) {1};
		\node[main node] (2) [right of = 1] {2};
		\node[main node] (3) [right of=2] {3};
		
		\path[every node/.style={font=\sffamily\small}]
		(1) edge [bend right] node  {} (2)
		(2) edge [bend right] node  {} (1)
		(2) edge [bend right] node  {} (3)
		(3) edge [bend right] node  {} (2);
		\end{tikzpicture}
	\end{center}
	Then $I = \{1,3\}$ is
	minimal filling, and also $J= \{2\}$, but
	$I$ and $J$ differ in cardinality.
	Note that, even if we could visit each vertex of the graph by starting at $K = \{1\}$ and following the directed edges, $K$ is not filling: The reason is that $\{2\}$ is not completely determined by $K$ alone (there is also edge pointing from $\{3\}$ to $\{2\}$ and $\{3\}$ is only determined through $\{2\}$). Put differently, $K_1 = K \cup \Fill(\{1\}) = \{1\}$, thus $K_1 = K_2 = K_3 = \dots = \{1\}$. Therefore $K = \{1\}$ is not filling.
\end{example}
Can we choose  a minimal filling subsets such that it intersects every
cycle (or every minimal cycle) at exactly 1 vertex? Unfortunately not:
\begin{example}
Let $V = \{0,1,2\}$, let $$E = (V\times V)\setminus 
\big\{(k,k): k\in\{0,1,2\}\big\},$$
and let $G=(V,E)$.  Note that $A(G) = \emptyset$.
If we take $K = \{k\}$ for some $k \in \{0,1,2\}$, it is easily verified
that $\Fill(K)=\emptyset$. So $K_n = K$ for all $n\in\mathbb{N}$, 
and therefore $K$ is not filling. So if $I\subseteq V$ is minimal filling, it
contains at least $2$ vertices of $V=\{1,2,3\}$, say $\{k_1,k_2\}\subseteq I$
for $k_1\neq k_2\in\{1,2,3\}$. But then $C=\{k_1,k_2\}$ is a minimal cycle
by definition of $E$, and $|I\cap C| > 1$.
\end{example}
By Theorem~\ref{chara1} or Theorem~\ref{th:dag} it is not hard to verify whether a given set of vertices is filling or not. However, in general there will be no simple algorithm to identify the smallest possible filling subset of $V$. The following greedy algorithm will, however, usually provide a good approximation.
\begin{algorithm}
	\mbox{}
	\begin{enumerate}
		\item Identify the set $I = A(G)$ of all vertices without incoming edges.
		\item\label{la:step2} Choose all vertices $W = \{w_1, \dots, w_m\}$ in all minimal cycles that do not intersect $I$. Denote by $n_i$ the number of cycles intersected by $w_i$, $1 \leq i \leq m$.
		\item If $W \ne \emptyset$, pick any $w_i \in W$ with maximal $n_i$ and set $I := I \cup \{w_i\}$
		\item Go to Step~\ref{la:step2} as long as $W$ contains at least two elements. Otherwise stop with $I$ as solution.
	\end{enumerate}
\end{algorithm}
The algorithm will terminate after maximal $n-1$ iterations with $n$ being the number of vertices. Due to its greedy nature, it might miss the optimal solution in certain hypothetical cases.
\begin{example}\label{ex:algorithm} To illustrate the algorithm and also the power of Theorems~\ref{chara1} and \ref{th:dag}, take the graph $G$ on the right hand side of Figure~\ref{fig:intro}. There are no vertices without incoming edge, thus $A(G)$ is empty. There are four minimal cycles that contain the following vertices. 
	\begin{enumerate}
		\item $\{\text{Age, Pregnant}\}$
		\item $\{\text{Sex, Pregnant}\}$
		\item $\{\text{Height, Age, Pregnant}\}$
		\item $\{\text{Sex, Height, Age, Pregnant}\}$
	\end{enumerate}
	Consequently, after the first iteration of the algorithm up to Step 3, the candidate filling set $I$ consists of the vertex ``Pregnant'' (hits highest number of cycles and $A(G)$ is empty). Furthermore, $W$ consists all vertices, so we start with a second iteration but now $W$ is left empty and the algorithm stops. Thus, even if only the field ``Pregnant'' is entered, the remaining fields can be autofilled without further input. Of course there are also filling subsets without containing ``Pregnant'', e.\,g.\ $I'=\{\text{Age, Sex}\}$ (apply Theorem~\ref{chara1}). Figure~\ref{fig:algorithm} presents the two subgraphs without edges pointing to $I = \{\text{Pregnant}\}$ (left picture) resp.\ to $I'$ (right picture), illustrating the idea of Theorem~\ref{th:dag}.
\end{example}
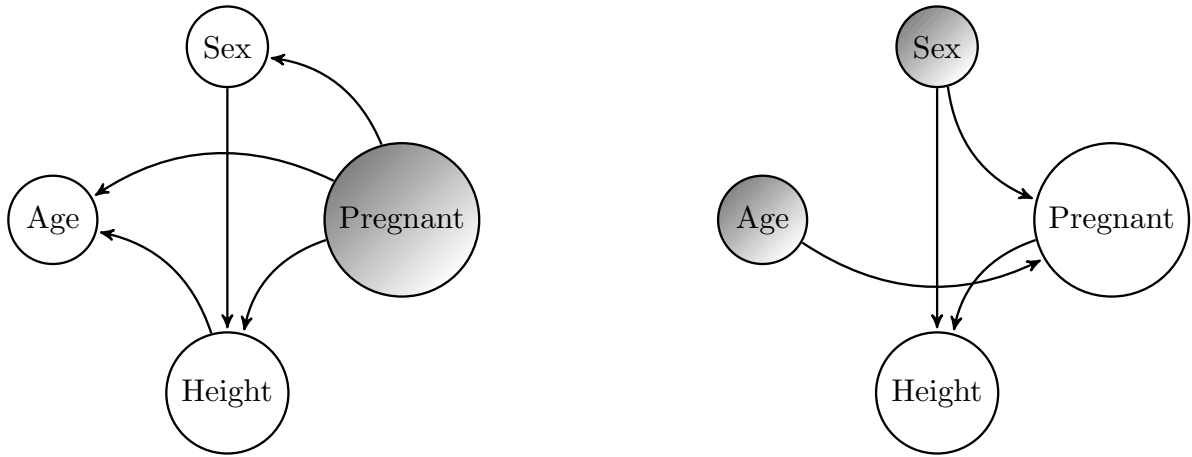
\begin{figure}[h]
	\centering
	\resizebox{0.4\textwidth}{!}{%
		\begin{tikzpicture}[->,>=stealth',shorten >=1pt,auto,node distance=3cm, thick,main node/.style={circle,draw}]	
		\node[main node] (1) {Sex};
		\node[main node] (2) [below left of=1] {Age};
		\node[main node] (3) [below right of=2] {Height};
		\node[main node] (4) [shading angle = 45, below right of=1] {Pregnant};
		
		\path[every node/.style={font=\sffamily\small}]
		(1) edge node [left] {} (3)
		(3) edge [bend right] node {} (2)
		(4) edge [bend right] node {} (1)
		(4) edge [bend right] node {} (2)
		(4) edge [bend right] node {} (3);
		\end{tikzpicture}}
	\hfill
	\resizebox{0.4\textwidth}{!}{%
		\begin{tikzpicture}[->,>=stealth',shorten >=1pt,auto,node distance=3cm, thick,main node/.style={circle,draw}]	
		\node[main node] (1) [shading angle = 45]{Sex};
		\node[main node] (2) [shading angle = 45, below left of=1] {Age};
		\node[main node] (3) [below right of=2] {Height};
		\node[main node] (4) [below right of=1] {Pregnant};
		
		\path[every node/.style={font=\sffamily\small}]
		(1) edge node [left] {} (3)
		(4) edge [bend right] node {} (3)
		(1) edge [bend right] node {} (4)
		(2) edge [bend right] node {} (4);
		\end{tikzpicture}}
	\caption{Subgraphs resulting from removing the incoming edges to $I = \{\text{Pregnant}\}$ resp.\ $I'=\{\text{Age, Sex}\}$ (both highlighted) in the situation of Example~\ref{ex:algorithm}. Since $I$ resp.\ $I'$ contain the vertices without incoming edges and since the subgraphs are acyclic, both $I$ and $I'$ are filling in the original graph without removed edges}.
\end{figure}\label{fig:algorithm}
\section{Filling with partial determination}\label{se:theory:partial}
Given $I\subseteq V$ and $v\in V$ we say that $I$ {\em
determines} the vertex $v$ {\em 
with partial input} if $\In(v) \cap I \neq \emptyset$, that is if {\em any} argument in a replacement function is provided.
For short, we say in that case that $I$ {\em p-determines}
$v$.

We denote the set of vertices p-determined by $I$ by
$\pFill(I)$.
Inductively we define the ``p-determination closure'' 
of $I\subseteq V$ in the following way:
\begin{enumerate}
\item $I^{(p)}_{0} := I$;
\item for $n\geq 0$ set $I^{(p)}_{n+1} :=I^{(p)}_n \cup \pFill(I_{n})$.
\end{enumerate}
We say that $I\subseteq V$ {\bf p-fills} $V$ if there is $n\in
\mathbb{N}$ such that $I^{(p)}_{n} = V$.
\begin{example}
	In Example~\ref{example_minind}, $I = \{1\}$ is p-filling (but not filling) because $I_1 = \{1, 2\}$ and $I_2 = \{1, 2, 3\}$.
\end{example}
There is a first, trivial characterization of p-filling
subsets of a graph $G=(V,E)$:
\begin{proposition}
Let $G=(V,E)$ be a self-loopless directed graph, and let $I\subseteq V$.
Then the following are equivalent:
\begin{enumerate}\label{way_prop}
\item $I$ is p-filling;
\item $A(G)\subseteq I$, and for every vertex $x\in V\setminus I$ 
there is a directed path from some $j\in I$ to $x$.
\end{enumerate}
\end{proposition}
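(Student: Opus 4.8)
The plan is to mirror the proof of Theorem~\ref{chara1}, exploiting that one step of p-determination corresponds exactly to traversing a single directed edge. Concretely, I will show that a directed path of length $n$ from a vertex of $I$ to $x$ puts $x$ into $I^{(p)}_n$, and conversely that every vertex added to the closure is reachable from $I$ along such a path. Here $I^{(p)}_n$ denotes the $n$-th stage of the p-determination closure, so that $I^{(p)}_{n+1}=I^{(p)}_n\cup\pFill\big(I^{(p)}_n\big)$.

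For $(1)\implies(2)$, first note that any $v\in A(G)$ has $\In(v)=\emptyset$, hence $\In(v)\cap J=\emptyset$ for every $J\subseteq V$; so $v$ is never p-determined, and if $v\notin I$ then $v\notin I^{(p)}_n$ for all $n$, contradicting that $I$ is p-filling. This gives $A(G)\subseteq I$. For the path condition, fix $x\in V\setminus I$ and induct on the least $n$ with $x\in I^{(p)}_n$; since $x\notin I^{(p)}_0=I$ we have $n\geq 1$, and $x\in\pFill\big(I^{(p)}_{n-1}\big)$ supplies a vertex $w\in\In(x)\cap I^{(p)}_{n-1}$. If $w\in I$ the single edge $(w,x)$ is the desired path; otherwise $w\in I^{(p)}_{n-1}\setminus I$ was added at an earlier stage, so the induction hypothesis gives a directed path from some $j\in I$ to $w$, which we prolong by the edge $(w,x)$.

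For $(2)\implies(1)$, the core step is the claim that a directed path of length $n$ from some $j\in I$ to a vertex $x$ forces $x\in I^{(p)}_n$, proved by induction on $n$: for $n=0$ we have $x=j\in I$, and for $n\geq 1$ the initial segment $p(0),\ldots,p(n-1)$ is a directed path of length $n-1$ from $j$ to $p(n-1)$, so $p(n-1)\in I^{(p)}_{n-1}$ by hypothesis; since $\big(p(n-1),x\big)\in E$ we get $p(n-1)\in\In(x)\cap I^{(p)}_{n-1}$, hence $x\in\pFill\big(I^{(p)}_{n-1}\big)\subseteq I^{(p)}_n$. Now every $x\in V\setminus I$ is reachable from $I$ by assumption, and after deleting repeated vertices such a path has length at most $|V|-1$; together with $I\subseteq I^{(p)}_{|V|-1}$ this yields $I^{(p)}_{|V|-1}=V$, so $I$ is p-filling. (Alternatively one invokes finiteness of $V$ to stabilize the increasing chain $I^{(p)}_0\subseteq I^{(p)}_1\subseteq\cdots$ and applies the claim to each vertex of $V\setminus I$.)

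I do not anticipate a real obstacle — the proposition is essentially bookkeeping, which is why the authors label it ``trivial''. The only points requiring care are the indexing in the two inductions (aligning the base case and the ``added at an earlier stage'' step with the recursive definition of $I^{(p)}_n$) and making the translation between path length and closure stage precise in both directions; the parenthetical finiteness argument also covers the mild gap of needing a single uniform stage by which all vertices of $V\setminus I$ have appeared.
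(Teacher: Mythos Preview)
Your proof is correct. The paper does not actually supply a proof for this proposition---it is announced as a ``trivial characterization'' and left unproved---so there is nothing to compare against; your edge-by-edge induction aligning path length with p-closure stage is precisely the natural argument the authors presumably had in mind.
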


However, we can do much better than that. Next, we specify in
a mathematical way what it means to ``collapse'' points $x,y\in V$
when there are directed paths between theses points in either direction.

\begin{definition}
If $G=(V,E)$ is a directed graph and $x,y\in V$ we say
that $x,y$ are {\em strongly connected}, in symbols
$x\simeq y$ if there
exists a directed path from $x$ to $y$ and vice versa.
\end{definition}
Again, it is straightforward to verify that $\simeq$ is
an equivalence relation on $V$. The set of elements
equivalent to $x\in V$ is denoted by $[x]_{\simeq}$,
and we call it the {\em strongly connected component (scc)}
containing $x$. The set of scc's on $V$ with
respect to $\simeq$ is denoted by $V/_{\simeq}$.
Note that by construction, every $v\in V$ lies in a unique
scc, the scc's are mutually disjoint, and all scc's 
are non-empty.

We put a directed graph structure on $V/_{\simeq}$
and set $G/_{\simeq} = (V/_{\simeq}, E/_{\simeq})$ where
$$E/_{\simeq} = \{(C,D)\in V/_{\simeq}\times V/_{\simeq}:
C\neq D \textrm{ and there are } c\in C, d\in D 
\textrm{ such that }
(c,d)\in E\}.$$
The following is an elementary observation:
\begin{lemma}\label{daglemma}
If $G$ is a directed graph, $G/_{\simeq}$ is a DAG.
\end{lemma}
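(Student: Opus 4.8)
The plan is to show that $G/_{\simeq}$ has no cycle by arguing that the existence of a cycle in $G/_{\simeq}$ would force two distinct scc's to actually be strongly connected in $G$, hence equal — a contradiction. So first I would assume, towards a contradiction, that there is a cycle in $G/_{\simeq}$: by the definition of cycle (with $n \geq 1$), there are distinct scc's $C_0, C_1, \ldots, C_{n-1}$ (with $n \geq 2$, since a cycle has at least two elements and distinct vertices of $G/_{\simeq}$ are distinct scc's) such that $(C_0, C_1), (C_1, C_2), \ldots, (C_{n-1}, C_0) \in E/_{\simeq}$.

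Next I would unpack the definition of $E/_{\simeq}$ along this cycle: for each consecutive pair $(C_k, C_{k+1})$ (indices mod $n$) there exist $c_k \in C_k$ and $d_{k+1} \in C_{k+1}$ with $(c_k, d_{k+1}) \in E$. The key observation is that $d_{k+1}$ and $c_{k+1}$ both lie in the same scc $C_{k+1}$, so by definition of $\simeq$ there is a directed path from $d_{k+1}$ to $c_{k+1}$ inside $G$. Concatenating these — the edge $(c_k, d_{k+1})$ followed by a path from $d_{k+1}$ to $c_{k+1}$, for $k = 0, \ldots, n-1$ cyclically — yields a directed path in $G$ from $c_0$ all the way around back to $c_0$. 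In particular, for any two indices $i \neq j$, this construction gives a directed path from $c_i$ to $c_j$ and, going the rest of the way around the cycle, a directed path from $c_j$ back to $c_i$. Hence $c_i \simeq c_j$, so $C_i = [c_i]_{\simeq} = [c_j]_{\simeq} = C_j$, contradicting that the $C_k$ are pairwise distinct.

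The only mildly delicate point — and the step I expect to need the most care — is the bookkeeping of concatenating directed paths: I should note once and for all that if there is a directed path from $x$ to $y$ and one from $y$ to $z$, then there is a directed path from $x$ to $z$ (just glue the two maps together, shifting the index of the second), and similarly that a single edge $(x,y) \in E$ gives a directed path from $x$ to $y$ with $n=1$. With that remark in hand the rest is purely formal. I would also remark that this argument simultaneously shows $G/_{\simeq}$ is self-loopless in the required sense, since $E/_{\simeq}$ was defined to exclude pairs $(C,C)$, so it is a legitimate directed graph in the sense of the paper; the content of the lemma is exactly the acyclicity just established.
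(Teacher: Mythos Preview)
The paper does not actually prove this lemma; it merely introduces it as ``an elementary observation'' and states it without argument. Your proof is correct and is the standard one: a cycle among distinct scc's in $G/_{\simeq}$ would, after unpacking $E/_{\simeq}$ and using that any two vertices in the same scc are joined by directed paths both ways, produce directed paths in $G$ between vertices of distinct scc's in both directions, forcing those scc's to coincide. The bookkeeping you flag (transitivity of ``there is a directed path'' and that a single edge gives a length-$1$ path) is indeed all that is needed, and the paper's definition of directed path allows the trivial $n=0$ path, so reflexivity of $\simeq$ and the case $d_{k+1}=c_{k+1}$ cause no trouble.
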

Before we show how strongly connected components come into play
for finding p-filling sets, we need some basic observations.
\begin{lemma}\label{folklore1}
\hspace*{1cm} 
\begin{enumerate} 
\item If $G=(V,E)$ is a DAG, then for all $v\in V$ there is
$x\in A(G)$ and a directed path from $x$ to $v$.
\item Let $G=(V,E)$ be any graph. If $C,D\in G/_{\simeq}$
and there is a direct path in $G/_{\simeq}$ from $C$ to $D$
then for all $c\in C, d\in D$ there is a directed path
in $G$ from $c$ to $d$.
\end{enumerate}
\end{lemma}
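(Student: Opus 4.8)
The statement to prove is Lemma~\ref{folklore1}, which has two parts. Both are standard facts about DAGs and the condensation of a directed graph, so the proof will be short.

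\textbf{Part (1).} The plan is to argue by following edges backwards. Fix $v \in V$. If $v \in A(G)$ we are done (take the trivial path of length $0$). Otherwise $\In(v) \neq \emptyset$, so we may pick some $v_1 \in \In(v)$; if $v_1 \notin A(G)$ we may pick $v_2 \in \In(v_1)$, and so on. This produces a sequence $v = v_0, v_1, v_2, \ldots$ with $(v_{k+1}, v_k) \in E$. Because $G$ is a DAG, no vertex can repeat in this sequence (a repetition would yield a directed cycle), so since $V$ is finite the construction must terminate, and it can only terminate at a vertex $v_m$ with $\In(v_m) = \emptyset$, i.e.\ $v_m \in A(G)$. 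Reversing the list $v_m, v_{m-1}, \ldots, v_0$ gives a directed path from $x := v_m \in A(G)$ to $v$. The only subtlety worth stating explicitly is why the sequence cannot cycle: if $v_i = v_j$ for $i < j$ then $v_i, v_{i-1}, \ldots, v_j = v_i$ traverses edges of $E$ and hence is a directed path from $v_i$ to itself of length $j - i \geq 1$, contradicting acyclicity.

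\textbf{Part (2).} Here the plan is to reduce the general case to the case of a single $E/_{\simeq}$-edge and then use the definition of $E/_{\simeq}$ together with the fact that any two vertices in the same scc are joined by directed paths in both directions. First, suppose $(C, D) \in E/_{\simeq}$. By definition there exist $c' \in C$ and $d' \in D$ with $(c', d') \in E$. Now take arbitrary $c \in C$, $d \in D$: since $c \simeq c'$ there is a directed path in $G$ from $c$ to $c'$, and since $d' \simeq d$ there is a directed path from $d'$ to $d$; concatenating (path from $c$ to $c'$)$\,\to\,$(edge $c' \to d'$)$\,\to\,$(path from $d'$ to $d$) yields a directed path in $G$ from $c$ to $d$. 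For a general directed path $C = C_0, C_1, \ldots, C_n = D$ in $G/_{\simeq}$, proceed by induction on $n$: pick any representative $c_i \in C_i$ for each $i$; by the single-edge case there is a directed path in $G$ from $c_i$ to $c_{i+1}$ for each $i$, and concatenating all of these gives a directed path in $G$ from $c_0$ to $c_n$; finally, given arbitrary $c \in C = C_0$ and $d \in D = C_n$, prepend a path from $c$ to $c_0$ (exists since $c \simeq c_0$) and append a path from $c_n$ to $d$ (exists since $c_n \simeq d$).

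I do not expect a genuine obstacle in either part; both are routine. The one place to be careful is bookkeeping with concatenation of directed paths — formally, if $p : \{0,\dots,n\} \to V$ and $q : \{0,\dots,m\} \to V$ are directed paths with $p(n) = q(0)$, their concatenation is the map $r : \{0,\dots,n+m\} \to V$ with $r(k) = p(k)$ for $k \leq n$ and $r(k) = q(k-n)$ for $k \geq n$, which one checks is again a directed path — but this is the kind of verification that can be left to the reader or dispatched in a sentence. The induction in part (2) is similarly mechanical once the single-edge case is in hand.
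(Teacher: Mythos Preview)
Your proof is correct. The paper does not actually prove this lemma: it labels the two statements ``basic observations'' (and indeed names the lemma \texttt{folklore1}) and moves on immediately to Proposition~\ref{key21}. Your arguments --- walking backwards along incoming edges until termination in part (1), and lifting a path in $G/_{\simeq}$ edge-by-edge using representatives and intra-scc paths in part (2) --- are exactly the standard ones and fill the gap appropriately. One tiny notational slip: in part (1) the cycle witnessing a repetition should read $v_j, v_{j-1}, \ldots, v_i = v_j$ (following the edges $(v_{k+1},v_k)\in E$), not $v_i, v_{i-1}, \ldots, v_j$; the idea is of course unaffected.
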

Combining these observations lead us to the following:
\begin{proposition}\label{key21}
Let $G=(V,E)$ be any graph, and let $v\in V$. Then 
there is a strongly connected component $C_0 \in 
A(G/_{\simeq})$ such that for every $c_0\in C_0$ there 
is a directed path in $G$ from $c_0$ to $v$.
\end{proposition}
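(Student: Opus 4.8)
The plan is to chase $v$ down the DAG $G/_{\simeq}$ from Lemma~\ref{daglemma} until we reach a source, and then use part (2) of Lemma~\ref{folklore1} to pull the resulting path back up to $G$. First I would let $C$ denote the strongly connected component of $v$, i.e.\ $C = [v]_{\simeq} \in V/_{\simeq}$. Now $G/_{\simeq}$ is a DAG by Lemma~\ref{daglemma}, so by part (1) of Lemma~\ref{folklore1} applied to this DAG, there is $C_0 \in A(G/_{\simeq})$ together with a directed path in $G/_{\simeq}$ from $C_0$ to $C$.

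Next I would translate this back to $G$. Fix an arbitrary $c_0 \in C_0$. If $C_0 = C$, then $c_0 \simeq v$, so by definition of $\simeq$ there is a directed path in $G$ from $c_0$ to $v$ and we are done. If $C_0 \neq C$, then the path in $G/_{\simeq}$ from $C_0$ to $C$ is a genuine path using at least one edge, so part (2) of Lemma~\ref{folklore1} applies and yields, for our chosen $c_0 \in C_0$ and for $v \in C$, a directed path in $G$ from $c_0$ to $v$. Since $c_0 \in C_0$ was arbitrary, this proves the claim with the component $C_0$ just produced.

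The only mild subtlety — and the step I would be most careful about — is the boundary case $C_0 = C$, because part (2) of Lemma~\ref{folklore1} as stated presumes an actual directed path in $G/_{\simeq}$ from $C$ to $D$ and it is worth being explicit about whether a length‑$0$ path (staying at $C$) is permitted; handling it separately via the definition of $\simeq$ as above sidesteps any ambiguity. Everything else is a routine concatenation of the two cited lemmas, so there is no real obstacle; the content is entirely carried by Lemma~\ref{daglemma} and Lemma~\ref{folklore1}.
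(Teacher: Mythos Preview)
Your proposal is correct and follows essentially the same route as the paper: pass to the DAG $G/_{\simeq}$ via Lemma~\ref{daglemma}, use Lemma~\ref{folklore1}(1) to find a source $C_0$ with a directed path to $[v]_{\simeq}$, and then lift the path back to $G$ via Lemma~\ref{folklore1}(2). The only difference is your explicit treatment of the case $C_0=[v]_{\simeq}$; the paper does not separate this case, since under its Definition~2.1 (which allows $n=0$) trivial paths exist and Lemma~\ref{folklore1}(2) already covers it, but your extra care does no harm.
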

\begin{proof}
Let $[v]_{\simeq}$ be the strongly connected component
containing $v$. Lemma \ref{daglemma} says that $G/_{\simeq}$
is a DAG. By Lemma \ref{folklore1} (1) 
there is $C_0\in A(G/_{\simeq})$
and a directed path in $G/_{\simeq}$ from $C_0$
to $[v]_{\simeq}$. Finally, Lemma \ref{folklore1} (2) 
implies that there is a directed path in $G$
from any $c_0\in C_0$ to $v$.
\end{proof}
\begin{theorem}\label{chara2}
Let $G=(V,E)$ be a directed graph, $I\subseteq V$,
and consider the 
graph $G/_{\simeq}$.
Then the following
statements are equivalent:
\begin{enumerate}
\item $I$ is p-filling;
\item $I$ intersects every strongly connected component
$C\in A(G/_{\simeq})$.
\end{enumerate}
\end{theorem}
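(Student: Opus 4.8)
The plan is to prove the two implications separately, using Proposition~\ref{key21} for the harder direction and Proposition~\ref{way_prop} as the bridge between p-filling and the existence of directed paths.

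For $(2)\implies(1)$, I would first observe that every source component $C\in A(G/_{\simeq})$ is in particular a strongly connected component consisting entirely of vertices of $G$ with no incoming edge from \emph{outside} the component; combined with the fact that within a nontrivial scc every vertex lies on a cycle, one checks that $A(G)\subseteq\bigcup A(G/_{\simeq})$, so the hypothesis that $I$ meets every $C\in A(G/_{\simeq})$ already gives $A(G)\subseteq I$ after a short argument (a vertex in $A(G)$ forms a singleton scc which must itself be a source of $G/_{\simeq}$). Then, for an arbitrary $x\in V\setminus I$, apply Proposition~\ref{key21} to get a source component $C_0\in A(G/_{\simeq})$ with a directed path in $G$ from every $c_0\in C_0$ to $x$; since $I\cap C_0\neq\emptyset$ by hypothesis, pick $j\in I\cap C_0$ and we have a directed path from $j\in I$ to $x$. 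By Proposition~\ref{way_prop} this shows $I$ is p-filling.

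For $(1)\implies(2)$, argue by contraposition: suppose there is a source component $C_0\in A(G/_{\simeq})$ with $I\cap C_0=\emptyset$. The key point is that no vertex of $C_0$ can have an incoming edge from outside $C_0$: if $(w,c)\in E$ with $c\in C_0$ and $w\notin C_0$, then in $G/_{\simeq}$ there is an edge from $[w]_{\simeq}$ into $C_0$, contradicting $C_0\in A(G/_{\simeq})$. Hence for any $c\in C_0$ we have $\In(c)\subseteq C_0$, so $\In(c)\cap I\subseteq C_0\cap I=\emptyset$; if additionally $\In(c)=\emptyset$ then $c\in A(G)$ and $C_0=\{c\}$ must still be disjoint from $I$. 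An easy induction (in the spirit of Lemma~\ref{baslem}) then shows $I^{(p)}_n\cap C_0=\emptyset$ for all $n$: if some $c\in C_0$ entered $I^{(p)}_{n+1}$ via p-determination, it would need $\In(c)\cap I^{(p)}_n\neq\emptyset$, but $\In(c)\subseteq C_0$ and $I^{(p)}_n\cap C_0=\emptyset$ by the inductive hypothesis. Therefore $I^{(p)}_n\neq V$ for all $n$ and $I$ is not p-filling.

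The main obstacle I expect is the bookkeeping around the relationship between $A(G)$ and $A(G/_{\simeq})$, and more generally making the ``no incoming edge from outside a source scc'' observation airtight, including the edge cases where a source component is a singleton (so it may genuinely be a vertex of $A(G)$) versus a nontrivial cycle (where every vertex has incoming edges, but only from within). Once that structural fact is pinned down, both directions are short; the $(2)\implies(1)$ direction is essentially immediate from Propositions~\ref{key21} and~\ref{way_prop}, and the converse is the usual ``cycle-avoiding'' induction already rehearsed for Theorem~\ref{chara1}.
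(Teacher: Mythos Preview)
Your proposal is correct and follows essentially the same route as the paper: the contrapositive induction on $I^{(p)}_n\cap C_0=\emptyset$ for $(1)\Rightarrow(2)$, and the combination of Proposition~\ref{key21} with Proposition~\ref{way_prop} for $(2)\Rightarrow(1)$. Your explicit verification that $A(G)\subseteq I$ (via the observation that any $v\in A(G)$ is a singleton source scc) is a worthwhile addition, since the paper invokes Proposition~\ref{way_prop} without separately checking this hypothesis.
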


\begin{proof}

\hspace*{2cm}

(1) $\implies$ (2). Suppose $C^* \in A(G/_{\simeq})$ 
and let $I\subseteq V\setminus C^*$. We show that
$I$ is not p-filling for $G$ by establishing that 
$I^{(p)}_n \cap C^* = \emptyset$ for all $n\in \mathbb{N}$.
The statement is true for $I^{(p)}_0 = I$. 
Assume that $I^{(p)}_k\cap C^* = \emptyset$. The fact
that $C^*\in A(G/_{\simeq})$ means by definition of $A(\cdot)$
and by definition of $E_{\simeq}$ that there is no
edge in $E$ coming into $C^*$ from the outside -- or,
more precisely, 
for all $x\in V\setminus C^*$ and $c\in C^*$ we have
$(x,c)\notin E$. Therefore, for all $c\in C^*$ we have
$c\notin I^{(p)}_{k+1}$, that is $C^*\cap I^{(p)}_{k+1}
=\emptyset$. This inductive argument proves that
$I^{(p)}_n\cap C^* = \emptyset$ for all $n\in \mathbb{N}$,
so $I^{(p)}_n \neq V$ for all $n$, and therefore $I$
is not p-filling.

(2) $\implies$ (1). Fix $v\in V$.  
By Proposition \ref{way_prop} we need 
to establish that if $I$ intersects any strongly connected
component (= element of $A(G/_{\simeq})$), then there
is a directed path from some $i\in I$ to $v$. The 
proposition then implies that $I$ is filling.

Use Proposition \ref{key21} to find $C_0\in A(G/_{\simeq})$
such that for every $c_0\in C_0$ there is a directed
path in $G$ from $c_0$ to $v$. Since $I$ intersects
$C_0$ by assumption, pick $j_0 \in I\cap C_0$. So
there is a directed path in $G$ from $j_0\in I$ 
to $v$. So by Proposition \ref{way_prop}, $I$ is filling
because $v$ was chosen arbitrarily.
\end{proof}
\begin{algorithm}
In \cite{Ta}, Tarjan introduced an algorithm
that identifies for any graph $G=(V,E)$ its
strongly connected components in time $O(|V|+|E|)$.
\end{algorithm}
Note that this theorem implies that minimal p-filling
subsets intersect every member of $A(G/_{\simeq})$ at
exactly one point. So this implies:
\begin{corollary}
All minimal (with respect to $\subseteq$) p-filling
subsets have the same cardinality.
\end{corollary}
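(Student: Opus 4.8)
The plan is to derive the corollary directly from Theorem~\ref{chara2}, using the remark that precedes it: every minimal p-filling subset meets each strongly connected component in $A(G/_{\simeq})$ in exactly one point. First I would spell out that remark carefully, since the whole proof hinges on it. Let $\mathcal{A} = A(G/_{\simeq})$ be the (finite) set of source components of the condensation DAG. By Theorem~\ref{chara2}, a set $I\subseteq V$ is p-filling if and only if $I\cap C\neq\emptyset$ for every $C\in\mathcal{A}$. The components in $\mathcal{A}$ are pairwise disjoint (they are distinct elements of $V/_{\simeq}$), so if $I$ is p-filling then $I$ contains at least one vertex from each $C\in\mathcal{A}$, i.e.\ $|I|\ge |\mathcal{A}|$.

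Next I would show the reverse inequality for minimal p-filling sets. Given any p-filling $I$, for each $C\in\mathcal{A}$ choose a single representative $r_C\in I\cap C$, and set $J = \{r_C : C\in\mathcal{A}\}$. Then $J\subseteq I$, and $J$ is still p-filling by Theorem~\ref{chara2} since it meets every $C\in\mathcal{A}$. Hence if $I$ is minimal we must have $J=I$, which forces $I = \{r_C : C\in\mathcal{A}\}$ with the $r_C$ pairwise distinct (distinct components are disjoint), so $|I| = |\mathcal{A}|$. Combining with the previous paragraph, every minimal p-filling subset has cardinality exactly $|\mathcal{A}| = |A(G/_{\simeq})|$, which is a quantity depending only on $G$; in particular all of them have the same cardinality.

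One small gap to close is the existence of at least one minimal p-filling subset, so that the statement is not vacuous: $V$ itself is p-filling (taking $n=0$ in the definition, $I^{(p)}_0 = V$), and since $V$ is finite any p-filling set contains a $\subseteq$-minimal one. I should also note that the construction in the second paragraph implicitly needs $I\cap C$ to be nonempty for the choice of $r_C$ to make sense, which is exactly what Theorem~\ref{chara2} guarantees for a p-filling $I$.

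The argument is essentially bookkeeping once Theorem~\ref{chara2} is in hand, so I do not expect a genuine obstacle; the only place requiring a little care is making the "exactly one point" remark rigorous — specifically, justifying that a minimal p-filling set cannot contain two vertices from the same $C\in\mathcal{A}$ (which follows because deleting one of them leaves a set still meeting every source component, contradicting minimality) and cannot contain any vertex outside $\bigcup\mathcal{A}$ (same reason). I would present these two reductions explicitly rather than citing the remark, to keep the proof self-contained.
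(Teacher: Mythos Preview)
Your proposal is correct and follows the same approach as the paper: the paper simply notes that Theorem~\ref{chara2} implies minimal p-filling subsets meet each member of $A(G/_{\simeq})$ in exactly one point, and states the corollary as an immediate consequence. Your write-up rigorously unpacks that remark (including the observations that a minimal $I$ can contain neither two vertices from the same source component nor any vertex outside $\bigcup\mathcal{A}$), so it is a more detailed version of the paper's one-line justification rather than a different argument.
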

This is in sharp contrast to the situation in the 
previous section, where minimal filling subsets can
have different cardinalities (see Example \ref{example_minind}).

Moreover, Theorem \ref{chara2} gives an efficient algorithm to find
minimally p-filling sets: Identify the strongly connected
components that don't have an incoming edge
(that is, $A(G/_{\simeq})$), and pick a vertex from each.
\begin{example}
	In the introductory example of a weight-calculator with the vertices 
	$V = \{\text{Age, Height, Sex}\}$, we could modify the replacement function
	\begin{align*}
		g(x, s) := 
		\begin{cases}
			162 + 16s & \text{if } x > 16, \\
			\lfloor (x - 1)/16 \cdot 130 + 30.5\rfloor & \text{if } x \le 16
		\end{cases}
   \end{align*} 
   for height $z$ based on (non-missing) age $x$ and sex $s$ by a function that allows one of the two arguments to be missing, for instance by
   \begin{align*}
   		g'(x, s) := 
   		\begin{cases}
   			162 + 16s & \text{if } (x > 16 \text{ or $x$ missing$)$ and $s$ non-missing}, \\
   			170 & \text{if } x > 16 \text { and $s$ missing}, \\
   			\lfloor (x - 1)/16 \cdot 130 + 30.5\rfloor & \text{if } x \le 16.
   		\end{cases}
   	\end{align*} 
   	Then, in our graph-theoretic autofill framework, the vertex ``Height'' would partially be determined by ``Sex'' and ``Age'' and ``Age'' (partially) determined by ``Height''. The graph on the left side of Figure~\ref{fig:intro} would be equivalent (under $\simeq$) to the DAG $G/_{\simeq}$ with the two strongly connected components $\{\text{Sex}\}$ and $\{\text{Age, Height}\}$ as vertices. By Theorem~\ref{chara2}, any subset of $V$ containing $A(G/_{\simeq}) = \{\text{Sex}\}$ would be filling. 
\end{example}
Generally, to use partial determination requires more effort to properly define the replacement functions compared to complete determination as these functions also need to treat missing input in a reasonable way. However, usually a smaller subset of input values is required to fill the remaining values.

\section{Conclusion}
Loopless directed graphs turn out to be the ideal framework for representing different kinds of ``inference'' or ``implication''. In this article, we used directed graphs in the context of missing values in vectors. In web forms used today, the set of mandatory fields is fixed. Our approach offers something new: flexible and smart filling of missing values.

{\footnotesize

}
\end{document}